\def\ps@headings{%
\def\@oddhead{\mbox{}\scriptsize\rightmark \hfil \thepage}%
\def\@evenhead{\scriptsize\thepage \hfil \leftmark\mbox{}}%
\def\@oddfoot{}%
\def\@evenfoot{}}
\newcommand{\F}{\mathbf{F}}
\newcommand{\C}{\mathcal{C}}
\newcommand{\N}{\mathcal{N}}
\newtheorem{theorem}{\textbf{Theorem}}
\newtheorem{lemma}[theorem]{\textbf{Lemma}}
\newtheorem{definition}[theorem]{\textbf{Definition}}
\newcommand{\nix}[1]{}
\begin{document}
\title{Protection Schemes for Two Link Failures\\ in Optical Networks\thanks{This
research was supported in part by grants CNS-0626741 and
CNS-0721453 from the National Science Foundation,
and a gift from Cisco Systems.}}
\author{
\authorblockN{Salah A. Aly ~~~~ and ~~~~~ Ahmed E. Kamal}
\authorblockA{Department of Electrical and Computer Engineering \\Iowa State University, Ames, IA 50011, USA\\Email: \{salah,kamal\}@iastate.edu}
\today{} } \maketitle

\begin{abstract}
In this paper we  develop network protection schemes against two link failures in optical networks. The motivation
behind this work is the fact that the majority of all available links in
an optical network suffer from single  and double link failures. In the
proposed network protection schemes, NPS2-I and NPS2-II, we deploy network coding and
reduced capacity  on the working paths to provide  backup protection
paths. In addition, we demonstrate the encoding and decoding aspects of the proposed  schemes.
\end{abstract}
\begin{keywords}
Network Protection, Optical Networks.
\end{keywords}

\section{Introduction}\label{sec:intro}

One of the main services of operation networks that must be deployed
efficiently is reliability. In order to deploy a  reliable networking
strategy, one needs to protect the transmitted signals  over unreliable
links. Link failures are common problems that might occur frequently in
single and multiple operating communication circuits. In network
survivability and network resilience, one needs to design  efficient
strategies to overcome this dilemma.  Optical network
survivability techniques are classified as \textit{pre-designed
protection} and \textit{dynamic restoration}~\cite{zhou00,kamal08a}. The
approach of using pre-designed protection aims to reserve enough
bandwidth such that when a failure occurs, backup paths are used to
reroute the transmission and be able to recover the data. Examples of this
type are 1-1 and 1-N protections~\cite{aly08i,kamal08b}. In dynamic
restoration reactive strategies, capacity is not reserved. However, when the failure occurs, dynamic recovery is used to recover the data transmitted in the links that are suffered from failures. This technique does not  need preserved resources or provision of extra paths that work in case of failure. In this work we will provide several strategies of
dynamic restoration based on coding and reduced distributed fairness
capacities.

Network coding is  a powerful tool that has been recently used to increase
the throughput, capacity, and performance of communication networks.
Information theoretic aspects of network coding have been investigated
in~\cite{soljanin07,ahlswede00}.  Network coding allows the intermediate
nodes not only to forward packets using network scheduling algorithms, but
also encode/decode them using algebraic primitive operations,
see~\cite{ahlswede00,fragouli06,soljanin07} and references therein. As an
application of network coding, data loss because of failures in
communication links can be detected and recovered if the sources are
allowed to perform network coding operations. Network coding is used to
maximize the throughput~\cite{ahlswede00,koetter03}. Also, it is robust
against packet losses and network failures~\cite{ho03,lun05,jaggi07}.

Network protection against  single and multiple link failures  using
adding extra protection paths has been introduced in~\cite{kamal06a,kamal08a}. The
source nodes are able to combine their data into extra paths (backup
\emph{protection paths}) that are used to protect all signals  on the
\emph{working paths} plain carrying data from all sources. In both cases,
$p$-cycles has been  used for protection against single and multiple
link failures.

In this paper we  design two schemes for network protection against  one and
two failed links in a network with $n$ disjoint \emph{working paths}: NPS2-I and NPS2-II. The approach is based on network coding   data
originated by the sources.    We assume that network capacity will be
reduced by a partial factor in order to achieve the required protection.
Several advantages of NPS2-I and NPS2-II can be stated as:
\begin{itemize}
\item

The data sent from the sources are protected without adding extra
paths. Two paths out of the $n$ disjoint \emph{working paths} will carry encoded
data, and hence they are \emph{protection paths}.

\item The encoding and decoding operations are achieved with less
    computational cost at both the sources and receivers. The recovery
    from failures is achieved immediately without asking the senders
    to retransmit the lost data.

\item The normalized network capacity is $(n-2)/n$, which is
    near-optimal in case of using a large number of connections.

\end{itemize}

\section{Network Model}\label{sec:previouswork}
In this section we present the network model and  basic terminology.
\begin{compactenum}[i)]
\item Let $\N$ be a network represented by an abstract graph
    $G=(\textbf{V},E)$, where $\textbf{V}$ is a set of nodes and $E$
    be   a set of undirected edges. Let $S$ and $R$ be  sets of independent sources
    and destinations, respectively. The set $\textbf{V}=V\cup S \cup
    R$ contains the relay, source, and destination nodes. Assume for simplicity that $|S|=|R|=n$,
    hence the set of sources is equal to the set of receivers.

\item  A path (connection) is a set of edges connected together with a
    starting node (sender) and an ending node (receiver).  $$ L_i=
    \{(s_i,e_{1i}),(e_{1i},e_{2i}),\ldots,(e_{(m)i},r_i) \},$$ where
    $1\leq i\leq n$ and $(e_{(j-1)i},e_{ji}) \in E$ for some integer $m$.
\item

The node can be a router, switch, or an end terminal depending on the
network model $\N$ and the transmission layer.

\item

$L$ is a set of paths $L=\{L_1,L_2,\ldots,L_n\}$ carrying the data from
the sources to the receivers as shown in Fig.~\ref{fig:nnodes}. All
connections have the same bandwidth, otherwise a connection with high
bandwidth can be divided into multiple connections, each of which has a
unit capacity. There are exactly $n$ connections. For simplicity, we
assume that the number of sources is less than or equal to the number of
available paths. A sender with a high capacity can divide its capacity
into multiple unit capacities, each of which has its own path.
  The failure on a link $L_i$ may occur due to the network
    circumstance such as a link replacement, overhead, etc. We do not address  in this work  the failure cause. However, we
    assume that there are one or two failures in the working paths and the
    protection strategy is able to protect/recover it.
\item Each sender $s_i \in S$ will transmit its own data $x_i$ to a
    receiver $r_i$ through a connection $L_i$. Also, $s_i$ will transmit
    encoded data $\sum_{i}^n x_i$ to $r_i$ at a different time slot if it is
    assigned to send the encoded data.

\item The data from all sources are sent in sessions. Each session has a
    number of time slots $n$. Hence $t_\delta^\ell$ is a value at round time
    slot $\ell$ in session $\delta$.

\item In this model $\N$ if we consider only a single link failure, then
    it    is     sufficient to apply the encoding and decoding operations over a finite
    field with two elements,  $\F_2=\{0,1\}$. However, if there are
    double failures, then a finite field with higher alphabets is
    required.

\item There are at least two receivers and two senders with at least two
    disjoint paths. Otherwise,  the protection strategy can not be deployed for a single
    path, which it can not protect itself.
\end{compactenum}

\begin{figure}[t]
 \begin{center} 
  \includegraphics[scale=0.8]{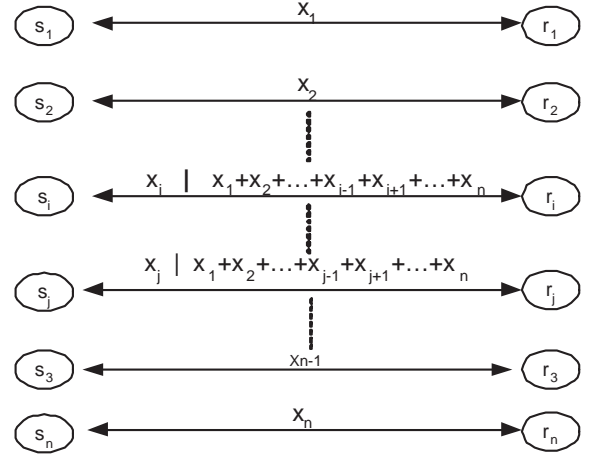}
  \caption{Network protection against  link/path failures using reduced capacity and network coding. Two paths out of  $n$ disjoint \emph{working  paths} carry encoded data for protection against two link failures.}\label{fig:nnodes}
\end{center}
\end{figure}

We can define the network capacity in the light of min-cut max-flow
information theoretic view~\cite{ahlswede00}.
\begin{definition}\label{def:capacitylink}
The capacity of a connecting path $L_i$ between $s_i$ and $r_i$ is defined
by
\begin{eqnarray} c_i=\left\{
      \begin{array}{ll}
        1, & \hbox{$L_i$ is \emph{active};} \\
        0, & \hbox{otherwise.}
      \end{array}
    \right.
\end{eqnarray}
The total capacity is given by the summation of all paths' capacities.
What we mean by an \emph{active}  link is that the receiver is able to
receive and process signals/messages throughout this link.
\end{definition}

 	\smallbreak

Clearly, if all links are active then the total capacity is $n$ and
normalized capacity is $1$. In general the normalized capacity of the
network for the active and failed links is computed as:
\begin{eqnarray}
C_\N=\frac{1}{n}\sum_{i=1}^n c_i.
\end{eqnarray}

We  define the \emph{working paths} and \emph{protection paths} as
follows:

\begin{definition}
The\emph{ working paths} of a network with $n$ connection paths are the
paths that carry unencoded data traffic  under normal operations. The
\emph{Protection paths} are paths that provide alternate backup paths to carry
encoded data traffic in case of failures. A protection scheme ensures that
data sent from the sources will reach the receivers in case of failure
incidences on the working paths.
\end{definition}

%


Every sender $s_i$ prepares a packet \emph{$packet_{s_i \rightarrow r_i}$}
 to send to the receiver $r_i$. The packet contains the sender's ID, data
$x_i^\ell$, and a round time for every session $t^\ell_\delta$ for some
integers $\delta$ and $\ell$. We have two types of packets:
\begin{compactenum}[i)]
\item Packets sent without coding, in which the sender does not need
    to perform any coding operations. For example, in case of packets
    sent without coding, the sender $s_i$ sends the following packet
    to the receiver $r_i$.
\begin{eqnarray}
packet_{s_i \rightarrow r_i}:=(ID_{s_i},x_i^\ell,t_\delta^\ell)
\end{eqnarray}

\item  Packets sent with encoded data, in which the sender needs to
    send other sender's data. In this case, the sender $s_i$ sends the following packet to receiver
$r_i$:
\begin{eqnarray}
packet_{s_i \rightarrow r_i}:=(ID_{s_i},\sum_{j=1,j\neq i}^n x_j^\ell,t^\ell_\delta).
\end{eqnarray}
The value $y_i^{\ell}=\sum_{j=1,j\neq i}^n x_j^\ell$  is computed by every
sender $s_i$, by  collecting the data from all other
senders and encoding them using the bit-wise operation.
\end{compactenum} In either case, the sender has a full capacity in the
connection path $L_i$.

 \smallbreak

The protection path that carries the encoded data from all sources is used
for the data recovery in case of failure. Assuming  the encoding
operations occur in the same round time of a particular session, every
source $s_i$ adds its value, for $1 \leq i \leq n$. Therefore, the encoded
data over the protection path is $y_i=\sum_{j=1, i\neq j}^n x_j$. The
decoding operations are done at every receiver $r_i$ by adding the data
$x_i$ received over the working path $L_i$. The node $r_k$ with failed
connection $L_k$ will be able to recover the data $x_k$. Assuming all
operations are achieved over the binary finite field $\F_2$. Hence we have
\begin{eqnarray}x_k= y_i-\sum_{j=1, i \neq j}^n x_j.\end{eqnarray}

\section{Protections Using Dedicated Paths (NPS2-I)}
In this section we develop a network protection scheme (NPS2-I) for two link failures in optical networks. The protection scheme is achieved using network coding and dedicated paths.  Assume we have $n$ connections carrying data from a set of $n$ sources to a set of $n$ receivers. All connections represent disjoint paths, and the sources are independent of each other. The authors in~\cite{kamal06a,aly08preprint1} introduced a model for
optical network protection against a single link failure using an extra and dedicated paths provisioning. In this model NPS2-I  we extend this approach to two link failures.

We will  provide two backup paths to protect against any two disjoint links,
which might experience failures. These two protection paths can be chosen
using network provisioning. The protection paths are fixed for all rounds
per session, but they may vary among sessions. For example, sender $s_i$  transmits a message $x_i^\ell$ to  a receiver $r_i$ at time $t_\delta^\ell$ in round time $\ell$ in session $\delta$. This process is explained
in Scheme~(\ref{eq:n-1protection2}) as:

\begin{eqnarray}\label{eq:n-1protection2}
\begin{array}{|c|ccccc|c|}
\hline
& \multicolumn{5}{|c|}{\mbox{ round time session 1 }}&\ldots    \\
\hline
&1&2&3&\ldots&n&\!\!\ldots\\
\hline    \hline
  s_1 \rightarrow r_1 & x_1^1&x_1^2 &x_1^3&\ldots  &x_1^n &\ldots \\
    s_2 \rightarrow r_2 &  x_2^1&  x_2^2&x_2^3&\ldots&x_2^{n} &\ldots \\
s_3 \rightarrow r_3 &  x_3^1& x_3^2& x_3^3&\ldots&x_3^{n} &\ldots \\
 \vdots&\vdots&\vdots&&\vdots&\vdots&\ldots\\
    s_i \rightarrow r_i& x_i^1 & x_i^2 &x_i^3&\ldots& x_i^{n}&\ldots\\
     s_j \rightarrow r_j & y_j^1&y_j^2&y_j^3&\ldots&y_j^{n}&\ldots\\
     s_{k} \rightarrow r_{k} & y_{k}^1&y_{k}^2&y_{k}^3&\ldots&y_{k}^{n}&\ldots\\
     s_{i+1} \rightarrow r_{i+1}& x_{i+1}^1 & x_{i+1}^2 &x_{i+1}^3&\ldots& x_{i+1}^{n}&\ldots\\
 \vdots&\vdots&\vdots&\vdots&\vdots&\vdots&\ldots\\
   s_n \rightarrow r_n & x_n^1&x_n^2&x_n^3&\ldots&x_{n}^n&\ldots\\
\hline
\hline
\end{array}
\end{eqnarray}
All $y_j^\ell$'s are defined as:
\begin{eqnarray} y^\ell_j=\sum_{i=1,i\neq j \neq k}^n a_i^\ell x_i^\ell \mbox{  and  } y^\ell_k=\sum_{i=1,i\neq k \neq j}^n b_i^\ell x_i^\ell.
\end{eqnarray}

The coefficients $a_i^\ell$ and $b_i^\ell$ are chosen over a finite field
$\F_q$ with $q > n-2$, see~\cite{aly08preprint1} for more details. One way to choose these coefficients is by using the follow two vectors.
\begin{eqnarray}\label{eq:twovectors}
\left[\begin{array}{ccccc}
1&1&1&\ldots&1\\
1&\alpha&\alpha^2&\ldots&\alpha^{n-3}
\end{array}\right]
\end{eqnarray}
Therefore, the coded data is
\begin{eqnarray} y^\ell_j=\sum_{i=1,i\neq j \neq k}^n  x_i^\ell \mbox{  and  } y^\ell_k=\sum_{i=1,i\neq k \neq j}^n \alpha^{i \mod n-2} x_i^\ell.
\end{eqnarray}
In the case of two failures, the receivers will be able to solve two linearly independent equations in two unknown variables. For instance, assume the two failures occur in paths number two and four. Then the receivers will be able to construct two equations with cofficients
\begin{eqnarray}\label{eq:twovectors2}
\left[\begin{array}{cc}
1&1\\
\alpha&\alpha^{3}
\end{array}\right]
\end{eqnarray}
 Therefore, we have
\begin{eqnarray}
x_2^\ell+x_4^\ell\\
\alpha x_2^\ell+\alpha^3 x_4^\ell
\end{eqnarray}
One can multiply the first equation by $\alpha$ and subtract the two equations to obtain value of $x_4^\ell$.

We  notice that the encoded data symbols
$y_j^\ell$ and $y_k^\ell$ are fixed per one session transmission but it is
varied for other sessions. This means that the path $L_j$ is dedicated to
send all encoded data $y_j^1,y_j^2,\ldots,y_j^n$.
\begin{lemma}
The normalized capacity of NPS2-I of the network model $\N$ described
in~(\ref{eq:n-1protection2}) is given by
\begin{eqnarray}
\C=(n-2)/n.
\end{eqnarray}
\end{lemma}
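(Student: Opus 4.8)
The plan is to count, slot by slot, how many of the $n$ disjoint paths actually deliver fresh source information, and then substitute that count directly into the normalized-capacity formula $C_\N = \frac{1}{n}\sum_{i=1}^n c_i$ of Definition~\ref{def:capacitylink}. The decisive structural fact is already laid out in Scheme~(\ref{eq:n-1protection2}): in every round time of a given session, the two designated paths $L_j$ and $L_k$ transmit the encoded symbols $y_j^\ell$ and $y_k^\ell$, while the remaining $n-2$ paths transmit the raw source symbols $x_i^\ell$. Because $y_j^\ell$ and $y_k^\ell$ are the $\F_q$-linear combinations defined in~(\ref{eq:twovectors}) of the very $x_i^\ell$ already entering the network on the working paths, they carry no new source information; they are pure redundancy reserved for recovery.

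First I would make this precise by assigning to each path an \emph{effective} capacity in the sense of Definition~\ref{def:capacitylink}: $c_i = 1$ for each of the $n-2$ working paths that delivers an as-yet-unseen symbol $x_i^\ell$, and $c_i = 0$ for each of the two protection paths $L_j$ and $L_k$, whose content is fully determined by the symbols already counted. This is the natural reading of ``active'' once one separates information-bearing traffic from protection traffic. Then the arithmetic is immediate: $\sum_{i=1}^n c_i = n-2$, so $C_\N = \frac{1}{n}(n-2) = (n-2)/n$, which is the asserted value $\C$.

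I expect the only genuine obstacle to be this interpretive step rather than any computation: one must justify that the two protection paths contribute nothing to the useful throughput even though, as links, they are perfectly active and hence would each count as $c_i=1$ under a literal reading. The justification is that the coefficient vectors in~(\ref{eq:twovectors}) are \emph{fixed} for the whole session, so $L_j$ and $L_k$ are dedicated protection paths that never carry an independent source stream; therefore the number of distinct source symbols delivered per slot is exactly $n-2$, uniformly across all $n$ slots of a session. Since the count is the same in every slot, no averaging over time is needed, and I would close by noting that the resulting $(n-2)/n$ coincides with the near-optimal normalized capacity advertised in the introduction for large $n$.
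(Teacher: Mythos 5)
Your proof is correct and follows essentially the same approach as the paper: both count the two dedicated protection paths as contributing no useful capacity, leaving $n-2$ working paths out of $n$, which immediately gives $\C=(n-2)/n$. The only cosmetic difference is that the paper aggregates over the $n$ rounds of a session (computing $(n-2)n/n^2$) while you count a single slot and note the count is uniform across slots; your explicit remark on reading Definition~\ref{def:capacitylink} so that protection paths get $c_i=0$ is an honest acknowledgment of an interpretive step the paper makes silently.
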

\begin{proof}
There are $n$ rounds in every session. Also, we have $n$ connections per a
round time. There exist two connections which carry backup data for
protection, hence there are $n-2$ connections that carry working data.
Therefore, the normalized capacity is given as:
\begin{eqnarray*}
\C=(n-2)n/n^2,
\end{eqnarray*}
which gives the result.
\end{proof}

In NPS2-I there are three different scenarios for two link failures, which
can be described as follows:
\begin{compactenum}[i)]
\item If the two link failures occur in the backup protection paths $L_j$
    and
    $L_k$,
    then no recovery operations are required at the receivers side. The
    reason is that these two paths are used for protections, and all other
    working paths will convey the data from the senders to  receivers.

\item If the two link failures occur in one backup protection path say
    $L_j$ and one working path $L_i$, then  recovery operations are
    required. The receiver $r_i$ must recover its data using one of the
    protection paths.
\item If the two link failures occur in two working paths, then in this
    case the two protection paths are used to recover the lost data. The
    idea of recovery in this case is to build a system of two equations
    with two unknown variables.
\end{compactenum}

\section{Protection Against Two Link Failures (NPS2-II)}\label{sec:singlefailure}
In this section we will provide an approach for network protection against
two link failures in optical networks.  We deploy network coding and
distribute capacity over the \emph{working paths}.  We will compute the
network capacity in this approach. In~\cite{aly08preprint1} we will
illustrate the tradeoff  and implementation aspects of  this approach,
where there is enough space for details.

We assume that there is a set of $n$ connections that need to be protected
with $\%100$ guarantee against  single and two link failures. Assume
$\F_q$ is a finite field with $q$ elements. For simplicity we consider n
is an even number.

\subsection{Two Link Failures, Achieving $(n-2)/n$ Capacity}
Let $x_i^\ell$ be the data sent from the source $s_i$ at round time $\ell$
in a session at time $t_\delta^\ell$. Also, assume the encoded data
$y_i=\sum_{j=1,j \neq i}^n x_j^\ell$. Put differently:
\begin{eqnarray}
y_i=x_1^\ell\oplus x_2^\ell\oplus \ldots \oplus x_{i \neq j}^\ell\oplus \ldots\oplus x_n^\ell.
\end{eqnarray}
The protection scheme NPS2-II runs in sessions as explained below. Every
session has at most one single failure through out its each round time. As
shown in Scheme~(\ref{eq:n-1protection}), the protection matrix for the
first session is given by the following protection code:

\begin{eqnarray}\label{eq:n-1protection}
\begin{array}{|c|cccccc|}
\hline \hline
& \multicolumn{6}{|c|}{\mbox{ round time session 1 }} \\
\hline
&1&2&3&\ldots&\ldots&\ell\\
\hline    \hline
  s_1 \rightarrow r_1 & y_1^1&x_1^1&x_1^2 &\ldots &\ldots  &x_1^{\ell-1}  \\
    s_2 \rightarrow r_2 &  y_2^1&x_2^1&  x_2^2&\ldots&\ldots&x_2^{\ell-1} \\
s_3 \rightarrow r_3 &  x_3^1&y_3^2 &x_3^2&\ldots&\ldots&x_3^{\ell-1}  \\
s_4 \rightarrow r_4 &  x_4^1&y_4^2& x_4^2&\ldots&\ldots&x_3^{\ell-1} \\
 \vdots&\vdots&\vdots&\vdots&\vdots&\vdots&\vdots\\
    s_i \rightarrow r_i& x_i^1 & x_i^2 &\ldots&y_{i}^j &\ldots& x_i^{\ell-1}\\
    s_{i+1} \rightarrow r_{i+1}& x_{i+1}^1 & x_{i+1}^2 &\ldots&y_{i+1}^j &\ldots& x_{i+1}^{\ell-1}\\
 \vdots&\vdots&\vdots&\vdots&\vdots&\vdots&\vdots\\
   s_{n-1} \rightarrow r_{n-1} & x_{n-1}^1&x_{n-1}^2&x_{n-1}^3&\ldots&\ldots&y_{n-1}^\ell\\
   s_n \rightarrow r_n & x_n^1&x_n^2&x_n^3&\ldots&\ldots&y_{n}^\ell\\
\hline
\end{array}
\end{eqnarray}
where
\begin{eqnarray}\label{eq:y_NPS-T}
y_k^\ell=\sum_{i=1}^{2(\ell-1)} a_i^{\ell-1} x_i^{\ell-1} + \sum_{i=2\ell+1}^n a_i^\ell x_i^\ell  \nonumber \\ \mbox{   for  } (2\ell-1) \leq k \leq 2\ell,  1 \leq \ell \leq n/2.
\end{eqnarray}

All coefficients are taken from $\F_q$ for $q> n-2$, see~\cite{aly08preprint1} for more details. Also, the two vectors shown in~\ref{eq:twovectors} can be used in this case.
We note that the
data symbols in NPS2-II are sent in independent sessions. This means that
every session has its own recovery scheme. Also, two failures occur in
only two connections during the session round times. Hence the sender
$s_i$ sends the message $x_i^j$ for all $1\leq j \leq \ell-1$ and $1 \leq
i \leq n$ during the first session. One can always change the round time
of the encoded data $y_k^\ell$ and the data $x_i^{j}$ for any round time
$j$ in the same session.


Now, we shall compute the normalized capacity of NPS2-II for the network
$\N$ at one particular session; the first session. The capacity is
calculated using the well-known min-cut max-flow
theorem~\cite{ahlswede00}.
\begin{theorem}\label{thm:n-1capacity}
The optimal fairness distributed normalized capacity of NPS2-II shown in
Scheme~(\ref{eq:n-1protection}) is given by
\begin{eqnarray}
\C=(n-2)/n.
\end{eqnarray}
\end{theorem}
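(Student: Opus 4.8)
The plan is to adapt the counting argument from the NPS2-I capacity lemma, exploiting the fact that the \emph{per-round} balance between working and protection traffic is what fixes the normalized capacity. First I would examine an arbitrary round time $\ell$ in Scheme~(\ref{eq:n-1protection}). Using the index range $(2\ell-1)\le k\le 2\ell$ in~(\ref{eq:y_NPS-T}), I would verify that in each round exactly two connections---namely $s_{2\ell-1}\to r_{2\ell-1}$ and $s_{2\ell}\to r_{2\ell}$---transmit encoded symbols $y_k^\ell$, while the remaining $n-2$ connections transmit raw data $x_i^\ell$. This is the crucial structural observation, and it holds identically in every round.

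Given this per-round split, the capacity computation is immediate and mirrors Definition~\ref{def:capacitylink}: of the $n$ active connections in any round, $n-2$ are working and $2$ are protection, so by the formula $C_\N=\frac{1}{n}\sum_{i=1}^n c_i$ the normalized capacity contributed by each round is $(n-2)/n$. Because this value is the same for every round of the session, the session-averaged normalized capacity is again $(n-2)/n$, independent of the session length. If one prefers the explicit slot count as in the NPS2-I proof, one reads off $n/2$ rounds per session from the bound $\ell\le n/2$, obtaining $(n-2)(n/2)/(n\cdot n/2)=(n-2)/n$; the round count cancels.

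To justify \emph{optimal} I would appeal to the min-cut max-flow viewpoint invoked just before the theorem: recovering from two simultaneous failures requires the receivers to solve a $2\times 2$ linear system, so at every round time at least two connections must be devoted to linearly independent encoded combinations. Hence no two-failure protection scheme can sustain more than $n-2$ working units per round, so $(n-2)/n$ is an upper bound that NPS2-II attains. For \emph{fairness} I would point out that the rotation in~(\ref{eq:y_NPS-T}) assigns encoding duty to each path in exactly one round per session, spreading the protection cost evenly over all sources rather than concentrating it on two dedicated paths as in NPS2-I.

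The main obstacle I anticipate is not the arithmetic but reconciling the scheme's notation, which uses $\ell$ both as a generic running column index and as the terminal round $n/2$; I would need to confirm from~(\ref{eq:y_NPS-T}) that the two-per-round protection pattern is exactly what the table encodes before the count is trustworthy. A secondary, more substantive point is the optimality lower bound: I would have to argue that two encoded paths are genuinely \emph{necessary} (not merely sufficient), which relies on the coefficient matrix~(\ref{eq:twovectors}) being chosen so that each of its $2\times2$ minors is invertible over $\F_q$, guaranteeing the recovered system is always solvable.
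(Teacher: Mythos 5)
Your counting argument is essentially identical to the paper's proof: the paper likewise observes that each of the $\ell=n/2$ rounds has $n-2$ working units out of $n$, computes the total $(n-2)\ell=(n^2-2n)/2$, and normalizes by $n\ell$ to get $(n-2)/n$. Your additional material on the necessity of two encoded paths (the optimality lower bound) and on fairness goes beyond the paper, which never argues optimality and simply performs this count, so your core derivation matches and is correct.
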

\begin{proof}
Let $n$ be the number of sources, each of which has a unit capacity in the
connection $L_i$ from $s_i$ to $r_i$. Let $j$ be an index of an arbitrary
session that has two link failures. We have $n$ paths that have capacity
$n$.  Also, we have $\ell=n/2$ round times, in which each round time has
$n-2$ capacity in the working paths. Therefore the total capacity is given
by
\begin{eqnarray}
(n-2)(\ell)=(n^2-2n)/2.
\end{eqnarray}
By normalizing this value with the total capacity $n\ell$, then the result
follows.
\end{proof}

The network protection strategy NPS2-II against one or two link failures
is deployed in two processes: Encoding and decoding operations. The
encoding operations are performed at the set of sources, in which one or
two sources will send the encoded data depending on the number of
failures. The decoding operations are performed at the receivers' side, in
which  receivers with  failed links had to receive all other receivers'
data in order to recover their own data. Depending on NPS2-I or NPS2-II
the receivers will experience some delay before they can actually decode
the packets. The transmission is done in rounds, hence linear
combinations of data has to be from the same round time. This can be
achieved using the round time that is included in each packet sent by a
sender.

Assume there are data collectors $\mathcal{S}$ and $\mathcal{R}$ at the
senders and receivers, respectively. They can be a sender (receiver) node
to send (receiver) encoded data, see~\cite{aly08preprint1}.

\bigskip

 \noindent \textbf{Encoding Process:} The encoding operations are for
 each round per a session.
\begin{itemize}
\item

The source nodes send a copy of their data to the data distributor
$\mathcal{S}$, then $\mathcal{S}$ will decide which source will send
the encoding data $y_k^\ell$ and all other sources will send their own
data $x_i^\ell$. This process will happen in every round during
transmission time.
\item The encoding is done by linear  operation of sources'
    coefficients which is the fastest arithmetic operation that can be
    performed among all sources' data.
\item The server $\mathcal{S}$ will change the sender $s_i$ that
    should send the encoded data $y_i^\ell$ in every round of a given
    session for the purpose of fairness and distributed capacities.
\end{itemize}

\begin{figure}[t]
 \begin{center} 
  \includegraphics[scale=0.5]{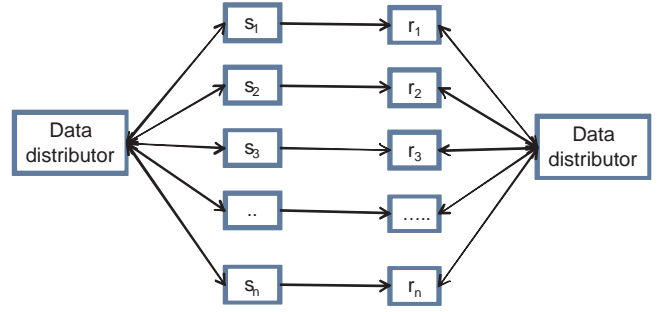}
  \caption{The encoding and decoding operations are done at the data distributor and collector, respectively.}\label{fig:nnodes}
\end{center}
\end{figure}

\bigskip

\noindent \textbf{Decoding Process:} The objective of the decoding and
recovery process is to withhold rerouting the signals or the transmitted
packets due to link failures,
see~\cite{aly08i,aly08preprint1,kamal08a}. { However, we provide
strategies that utilize network coding and reduced capacity at the source
nodes. We assume that the receiver nodes are able to perform decoding
operations using a data collector $\mathcal{R}$.}

We assume there is a data distributor $\mathcal{R}$ that will collect data
from all \emph{working} and \emph{protection paths} and is able to perform the decoding
operations.   In this case we assume that all receivers $R$ have available
shared   paths with the data collector $\mathcal{R}$. At the receivers
side, if there are  two failures in  paths $L_j$ and $L_k$, then there are
several situations.

\begin{itemize}
\item If the paths $L_j$ and $L_k$ carry unencoded data  (\emph{working paths}), then the data distributor $\mathcal{R}$ must query
    all other nodes in order to recover the failed data. In this case
    $r_k$ and $r_j$ must query $\mathcal{R}$ to retrieve their lost
    data.

\item If the path $L_k$ carries encoded data $y_k$ (\emph{protection path})
    and $L_j$ carries unencoded data (\emph{working path}), then data
    collector $\mathcal{R}$ must query all other receivers in order to
    perform decoding, and $r_j$ receives the lost data  $x_j^\ell$.
\item If the paths $L_j$     and $L_k$ carry encoded data (they are
    both \emph{protection paths}), then no action is required.
    \end{itemize}

\bigskip

\section{Conclusion}\label{sec:conclusion}
In this paper we presented network protection schemes NPS2-I and NPS2-II
against  single and double  link  failures in optical networks. We showed
that protecting two failures can be achieved using network coding and
reduced capacity. The normalized capacity of the proposed schemes is
$(n-2)/n$, which is near optimal for a large number of connections. Extended version of this paper will appear in~\cite{aly08preprint1}, where protection against $t$ multiple failures is investigated.

\bigskip

\section{ACKNOWLEDGMENTS}
This research was supported in part by grants CNS-0626741 and
CNS-0721453 from the National Science Foundation,
and a gift from Cisco Systems. \emph{S~A.~A would like to thank Prof. A~E.~K for his support and guidance. He is  also grateful to Prof. E.~Soljanin and Prof. A.~Klappenecker for their kindness and for training him how to inscribe high quality research papers.}

\bigskip

\scriptsize
\bibliographystyle{plain}

\begin{thebibliography}{13}

\bibitem{ahlswede00}
R.~Ahlswede, N.~Cai, S.-Y.~R. Li, and R.~W. Yeung.
\newblock Network information flow.
\newblock {\em IEEE Trans. Inform. Theory}, 46:1204--1216, 2000.

\bibitem{aly08i}
S.~A. Aly and A.~E. Kamal.
\newblock Network protection codes against link failures using network coding.
\newblock In {\em Proc. IEEE GlobelComm '08, New Orleans, LA}, December 2008.

\bibitem{aly08preprint1}
S.~A. Aly and A.~E. Kamal.
\newblock On the construction of network protection codes against network
  failures.
\newblock {\em IEEE Journal on Selected Areas in Communciation (JSAC) , on
  submission}, 2008.

\bibitem{fragouli06}
C.~Fragouli, J.~Le Boudec, and J.~Widmer.
\newblock Network coding: An instant primer.
\newblock {\em ACM SIGCOMM Computer Communication Review}, 36(1):63--68, 2006.

\bibitem{ho03}
T.~Ho, R.~Koetter, M.~Medard, D.~Karger, and M.~Effros.
\newblock The benefits of coding over routing in a randomized setting.
\newblock In {\em Proc. of the IEEE International Symposium on Information
  Theory (ISIT03)}, page 442, Yokohama, Japan, June 2003.

\bibitem{jaggi07}
S.~Jaggi, M.~Langberg, S.~Katti, T.~Ho, D.~Katabi, and M.~Medard.
\newblock Resilient network coding in the presence of byzantine adversaries.
\newblock In {\em Proc. IEEE INFOCOM}, 2007.

\bibitem{kamal06a}
A.~E. Kamal.
\newblock {1+N} protection in optical mesh networks using network coding on
  p-cycles.
\newblock In {\em Proc. of the IEEE Globecom}, 2006.

\bibitem{kamal08a}
A.~E. Kamal.
\newblock {1+N} network protection for mesh networks: network coding-based
  protection using p-cycles.
\newblock {\em submitted to IEEE Journal of Communication}, 2008.

\bibitem{kamal08b}
A.~E. Kamal.
\newblock A generalized strategy for {1+N} protection.
\newblock In {\em Proc. of the IEEE International Conference on Communications
  (ICC)}, 2008.

\bibitem{koetter03}
R.~Koetter and M.~Medard.
\newblock An algebraic approach to network coding.
\newblock {\em IEEE/ACM transactions on networking}, 2003.

\bibitem{lun05}
D.~S. Lun, N.~Ranakar, R.~Koetter, M.~Medard, E.~Ahmed, and H.~Lee.
\newblock Achieving minimum-cost multicast: A decentralized approach based on
  network coding.
\newblock In {\em In Proc. the 24th IEEE INFOCOM}, volume~3, pages 1607-- 1617,
  March 2005.

\bibitem{soljanin07}
E.~Soljanin and C.~Fragouli.
\newblock Network coding—information flow perspective.
\newblock 2007.

\bibitem{zhou00}
D.~Zhou and S.~Subramaniam.
\newblock Survivability in optical networks.
\newblock {\em IEEE network}, 14:16--23, Nov./Dec. 2000.

\end{thebibliography}

\end{document}